\newcommand{\be}{\begin{enumerate}}
\newcommand{\ee}{\end{enumerate}}
\newcommand{\D}{\mathsf D}
\newcommand{\R}{\mathsf R}
\newcommand{\Adj}{\text{Adj}}
\newcommand{\Prob}{\mathbb P}
\newtheorem{thm}{Theorem}
\newtheorem{lem}[thm]{Lemma}
\newtheorem{definition}{Definition}
\newtheorem{rem}{Remark}
\title{\LARGE \bf
On Differentially Private Filtering for Event Streams
}
\author{Jerome Le Ny
\thanks{J. Le Ny is with the department of Electrical Engineering, Ecole Polytechnique de Montreal,
QC H3T-1J4, Canada.
        {\tt\small jerome.le-ny@polymtl.ca}}%
}
\begin{document}

\maketitle
\thispagestyle{empty}
\pagestyle{empty}





\begin{abstract}

Rigorous privacy mechanisms that can cope with dynamic data are required to encourage
a wider adoption of large-scale monitoring and decision systems relying on end-user information.
A promising approach to develop these mechanisms is to specify quantitative privacy requirements
at design time rather than as an afterthought, and to rely on signal processing techniques to
achieve satisfying trade-offs between privacy and performance specifications.
This paper discusses, from the signal processing point of view, an event stream analysis problem 
introduced in the database and cryptography literature. A discrete-valued input 
signal describes the occurrence of events contributed by end-users, and a system is supposed 
to provide some output signal based on this information, while preserving the privacy 
of the participants. The notion of privacy adopted here is that of event-level differential privacy, 
which provides strong privacy guarantees and has important operational advantages. Several 
mechanisms are described to provide differentially private output signals while minimizing the
impact on performance. These mechanisms demonstrate the benefits of leveraging system theoretic
techniques to provide privacy guarantees for dynamic systems.

\end{abstract}

\section{Introduction}

Privacy issues associated with emerging large-scale monitoring and decision systems are receiving
an increasing amount of attention. Indeed, privacy concerns are already resulting in delays or cancellations
in the deployment of smart grids, location-based services, or civilian unmanned aerial systems \cite{EPIC03_privacyCenter}.
In order to encourage the adoption of these systems, which can have important societal benefits, 
new mechanisms providing clear and rigorous privacy protection guarantees are needed.

Unfortunately, providing such guarantees for a system generally involves sacrificing some level of performance. 
Evaluating the resulting trade-offs rigorously requires a quantitative definition of privacy, and in the last few years
the notion of differential privacy has emerged essentially as a standard specification \cite{Dwork_ICAL06_DP}.
Intuitively, a system receiving inputs from end-users is differentially private if one cannot infer from its observable 
behavior if any specific individual contributed its data or not. Other quantitative notions of privacy have been proposed, 
e.g., \cite{Duncan86_disclosure, Sankar11_privacyInfoTheoretic}, but the differential privacy definition has important 
operational advantages. In particular, it does not require modeling the available auxiliary information that can be 
linked to the output of the system of interest to create privacy breaches. Moreover, it is an achievable privacy goal 
despite the fact that a database on which an individual has no influence could still potentially leak information 
about her in the presence of arbitrary auxiliary information \cite{Dwork_ICAL06_DP}.

Nevertheless, differential privacy is a very strong notion of privacy and might require large perturbations to the
published results of an analysis in order to hide the presence of individuals. This is especially true for applications 
where users continuously contribute data over time, and it is thus important to design advanced mechanisms that 
can limit the impact on system performance of differential privacy requirements. Previous work on designing differentially
private mechanisms for the publication of time-series include \cite{Rastogi10_DPtimeSeries, Li11_DPcompressive},
but these mechanisms are not causal and hence not suited for real-time applications. The papers 
\cite{Dwork10_DPcounter, Chan11_DPcontinuous, Bolot11_DPdecayingSums}
provide real-time mechanisms to approximate a few specific filters transforming user-contributed input event streams into
public output streams. For example, \cite{Dwork10_DPcounter, Chan11_DPcontinuous} consider a private accumulator
providing the total number of events that occured in the past. This paper is inspired by this scenario, and builds
on our previous work on this problem \cite[Section IV]{LeNy_DP_CDC12} \cite[Section VI]{LeNyDP2012_journalVersion}.

The rest of the paper is organized as follows.
Section \ref{section: differential privacy background} provides some technical background on differential privacy
and describes a basic mechanism enforcing privacy by injecting white Gaussian noise. 
Section \ref{section: problem presentation} describes the real-time event stream filtering scenario of interest.
In Section \ref{eq: linear equalization mechanisms}, we optimize architectures based on linear estimators to 
provide real-time private filters with reduced impact on performance. Section \ref{eq: DF mechanisms} attempts at 
leveraging the knowledge that the input stream takes values in a discrete set, by considering slightly non-linear 
structures based on decision-feedback equalization. Finally, we conclude with a brief illustrative example
in Section \ref{eq: example}.




\section{Differential Privacy}	\label{section: differential privacy background}

In this section we review the notion of differential privacy \cite{Dwork06_DPcalibration} 
as well as a basic mechanism that can be used to achieve it when the released data 
belongs to a finite-dimensional vector space. We refer the reader to the surveys by Dwork, 
e.g., \cite{Dwork_ICAL06_DP}, for additional background on differential privacy, and 
to \cite{LeNyDP2012_journalVersion} for the proofs of the results in this section.

\subsection{Definition}

Let us fix some probability space $(\Omega, \mathcal F, \Prob)$. Let $\D$ be a space of datasets of interest 
(e.g., a space of data tables, or a signal space). A \emph{mechanism} is a map $M: \D \times \Omega \to \R$, 
for some measurable output space $\R$, such that for any element $d \in \D$, $M(d,\cdot)$ is a random variable, 
typically writen simply $M(d)$. A mechanism can be viewed as a probabilistic algorithm to answer a query $q$, 
which is a map $q: \D \to \R$. 

%
Next, we introduce the definition of differential privacy.  
Intuitively in the following definition, $\D$ is a space of datasets of interest, and we have a symmetric binary relation 
$\Adj$ on $\D$, called adjacency, such that $\Adj(d,d')$ if and only if $d$ and $d'$ differ by the data of a single participant.

\begin{definition}	\label{def: differential privacy original}
Let $\D$ be a space equipped with a symmetric binary relation denoted $\Adj$, and let $(\R, \mathcal M)$ 
be a measurable space. Let $\epsilon, \delta \geq 0$. A mechanism $M: \D \times \Omega \to \R$ is 
$(\epsilon, \delta)$-differentially private if for all $d,d' \in \D$ such that $\Adj(d,d')$, we have
\begin{align}	\label{eq: standard def approximate DP original}
\Prob(M(d) \in S) \leq e^{\epsilon} \Prob(M(d') \in S) + \delta, \;\; \forall S \in \mathcal M. 
\end{align}
If $\delta=0$, the mechanism is said to be $\epsilon$-differentially private. 
\end{definition}

The definition says that for two adjacent datasets, the distributions over the outputs of the mechanism should be close.  
The choice of the parameters $\epsilon, \delta$ is set by the privacy policy. Typically $\epsilon$ is taken to be a small 
constant, e.g., $\epsilon \approx 0.5$ or perhaps even $\ln p$ for some small $p \in \mathbb N$. The parameter $\delta$ 
should be kept small as it controls the probability of certain significant losses of privacy, e.g., when a zero probability 
event for input $d'$ becomes an event with positive probability for input $d$ in (\ref{eq: standard def approximate DP original}).

A fundamental property of the notion of differential privacy is that no additional privacy loss can occur by simply manipulating an output 
that is differentially private. 
To state it, recall that a probability kernel between two measurable spaces $(\R_1,\mathcal M_1)$ and $(\R_2,\mathcal M_2)$ is
a function $k: \R_1 \times \mathcal M_2 \to [0,1]$ such that $k(\cdot,S)$ is measurable for each $S \in \mathcal M_2$ and
$k(r,\cdot)$ is a probability measure for each $r \in \R_1$.

\begin{thm}[Resilience to post-processing]	\label{thm: resilience to post-processing}
Let $M_1: \D \times \Omega \to (\R_1,\mathcal M_1)$ be an $(\epsilon,\delta)$-differentially private mechanism.
Let $M_2: \D \times \Omega \to (\R_2,\mathcal M_2)$ be another mechanism, such that there exists
a probability kernel $k: \R_1 \times \mathcal M_2 \to [0,1]$ verifying
\begin{align}	\label{eq: post-processing definition}
\Prob(M_2(d) \in S | M_1(d)) = k(M_1(d),S), \; \text{a.s.}, 
\end{align}
for all $S \in \mathcal M_2$ and $d \in \D$. Then $M_2$ is $(\epsilon,\delta)$-differentially private. 
\end{thm}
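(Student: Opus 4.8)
The plan is to reduce the claim to an inequality between integrals against the output distributions of $M_1$. Fix adjacent datasets $d,d' \in \D$ with $\Adj(d,d')$ and a set $S \in \mathcal M_2$, and write $\mu_d$ for the law of $M_1(d)$ on $(\R_1,\mathcal M_1)$, i.e. $\mu_d(A) = \Prob(M_1(d) \in A)$ for $A \in \mathcal M_1$. First I would combine the defining property \eqref{eq: post-processing definition} of the kernel with the tower property of conditional expectation to get
\begin{align}
\Prob(M_2(d) \in S) = \mathbb E\big[\Prob(M_2(d) \in S \mid M_1(d))\big] = \mathbb E\big[k(M_1(d),S)\big] = \int_{\R_1} k(r,S)\, \mu_d(dr),
\end{align}
and likewise $\Prob(M_2(d') \in S) = \int_{\R_1} k(r,S)\, \mu_{d'}(dr)$. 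Hence the theorem follows once we establish
\begin{align}
\int_{\R_1} g\, d\mu_d \;\leq\; e^{\epsilon} \int_{\R_1} g\, d\mu_{d'} + \delta
\end{align}
for the particular bounded measurable function $g = k(\cdot,S): \R_1 \to [0,1]$.

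The key point is that $(\epsilon,\delta)$-differential privacy of $M_1$ says precisely $\mu_d(A) \leq e^{\epsilon}\mu_{d'}(A) + \delta$ for every $A \in \mathcal M_1$, i.e. the desired inequality already holds for every indicator $g = \mathbf 1_A$. To pass from indicators to an arbitrary $[0,1]$-valued measurable $g$, I would use the layer-cake representation $g(r) = \int_0^1 \mathbf 1_{\{g > t\}}(r)\, dt$ (valid since $0 \le g \le 1$) together with Tonelli's theorem:
\begin{align}
\int_{\R_1} g\, d\mu_d = \int_0^1 \mu_d(\{g > t\})\, dt \leq \int_0^1 \big(e^{\epsilon}\mu_{d'}(\{g > t\}) + \delta\big)\, dt = e^{\epsilon}\int_{\R_1} g\, d\mu_{d'} + \delta,
\end{align}
where the last equality uses that $t$ ranges over an interval of length one, so the constant term integrates to exactly $\delta$. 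Substituting $g = k(\cdot,S)$ and using the two identities above yields \eqref{eq: standard def approximate DP original} for $M_2$; since $d,d'$ and $S$ were arbitrary, this proves that $M_2$ is $(\epsilon,\delta)$-differentially private.

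The routine but slightly delicate step — the one I would be most careful about — is the conditional-expectation manipulation: hypothesis \eqref{eq: post-processing definition} only prescribes a version of $\Prob(M_2(d) \in S \mid M_1(d))$, so I would note that $k(M_1(d),S)$ is $\sigma(M_1(d))$-measurable (because $k(\cdot,S)$ is measurable) and integrable (it is bounded by one), which legitimizes taking expectations on both sides and gives the first display. The rest is just the privacy inequality for $M_1$ applied inside a Tonelli swap; no extra structure on $\R_1$, $\R_2$, or the adjacency relation is used, which is why post-processing preserves privacy in this generality.
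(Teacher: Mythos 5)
Your proof is correct: the reduction via the tower property to the inequality $\int k(\cdot,S)\,d\mu_d \leq e^{\epsilon}\int k(\cdot,S)\,d\mu_{d'}+\delta$, and the passage from indicators to a general $[0,1]$-valued measurable function through the layer-cake representation (which is exactly what keeps the additive term equal to a single $\delta$), is the standard and complete argument. Note that this conference paper does not actually contain a proof of the theorem --- it defers all proofs of this section to the cited journal version --- and your argument coincides in substance with the one given there, so there is nothing further to reconcile.
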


Note that in (\ref{eq: post-processing definition}), the kernel $k$ is not allowed to depend on the dataset $d$. In other words, this condition 
says that once $M_1(d)$ is known, the distribution of $M_2(d)$ does not further depend on $d$. The theorem says that a mechanism $M_2$ 
accessing a dataset only indirectly via the output of a differentially private mechanism $M_1$ cannot weaken the privacy guarantee.


\subsection{A Basic Differentially Private Mechanism}	\label{section: basic mech}

A mechanism that throws away all the information in a dataset is obviously private, 
but not useful, and in general one has to trade off privacy for utility when 
answering specific queries. 
We recall below a basic mechanism that can be used to
answer queries in a differentially private way.
We are only concerned in this section with queries that return numerical answers, 
i.e., here a query is a map $q: \D \to \R$, where the output space $\R$ equals $\mathbb R^k$ for some $k > 0$,
is equipped with a norm  denoted $\| \cdot \|_\R$,
and the $\sigma$-algebra $\mathcal M$ on $\R$ is taken to be the standard 
Borel $\sigma$-algebra. 
The following quantity plays an important
role in the design of differentially private mechanisms \cite{Dwork06_DPcalibration}.

\begin{definition}	\label{defn: sensitivity}
Let $\D$ be a space equipped with an adjacency relation $\Adj$.
The sensitivity of a query $q: \D \to \R$ is defined as
\[
\Delta_\R q := \max_{d,d':\Adj(d,d')} \|q(d) - q(d')\|_\R.
\]
In particular, for $\R = \mathbb R^k$ equipped with the $p$-norm 
$\| x \|_p = \left(\sum_{i=1}^k |x_i|^p \right)^{1/p}$, 
for $p \in [1,\infty]$,
we denote the $\ell_p$ sensitivity by $\Delta_p q$.
\end{definition}

A differentially private mechanism proposed in  \cite{Dwork06_DPgaussian} modifies an answer to a 
numerical query by adding iid zero-mean Gaussian noise.
Recall the definition of the $\mathcal Q$-function 
\[
\mathcal Q(x) := \frac{1}{\sqrt{2 \pi}} \int_x^{\infty} e^{-\frac{u^2}{2}} du.
\]
We have the following theorem \cite{Dwork06_DPgaussian, LeNyDP2012_journalVersion}.

\vspace{0.1cm}
\begin{thm}	\label{thm: Gaussian mech}
Let $q: \D \to \mathbb R^k$ be a query.
Then the Gaussian mechanism $M_q: \D \times \Omega \to \mathbb R^k$ 
defined by $M_q(d) = q(d) + w$, with $w \sim \mathcal N\left(0,\sigma^2 I_k \right)$, 
where $\sigma \geq \frac{\Delta_2 q}{2 \epsilon}(K + \sqrt{K^2+2\epsilon})$ and $K = \mathcal Q^{-1}(\delta)$,
is $(\epsilon,\delta)$-differentially private.
\end{thm}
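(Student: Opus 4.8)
The plan is to reduce the statement to a one-dimensional tail bound on the \emph{privacy loss} random variable and then to solve an elementary quadratic inequality. First I would invoke the standard reduction (proved in \cite{LeNyDP2012_journalVersion}): fix adjacent datasets $d,d'$, let $f$ and $f'$ be the densities on $\mathbb R^k$ of $M_q(d)$ and $M_q(d')$ (both strictly positive Gaussians), and split an arbitrary Borel set $S$ into $S_1 = \{x\in S : f(x)\le e^{\epsilon}f'(x)\}$ and $S_2 = S\setminus S_1$. On $S_1$ one gets $\int_{S_1} f \le e^{\epsilon}\Prob(M_q(d')\in S)$ immediately, while $\int_{S_2} f = \Prob(X\in S_2) \le \Prob\big(\ln(f(X)/f'(X)) > \epsilon\big)$ for $X\sim M_q(d)$. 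Hence it suffices to prove $\Prob(L>\epsilon)\le\delta$ for the privacy loss $L := \ln(f(X)/f'(X))$, $X\sim M_q(d)$, for every adjacent pair; the reversed inequality in \eqref{eq: standard def approximate DP original} follows from the same argument with the roles of $d,d'$ exchanged, using that $\Adj$ and $\|q(d)-q(d')\|_2$ are symmetric.

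Next I would compute $L$ explicitly. Writing $v := q(d)-q(d')$ and $X = q(d)+w$ with $w\sim\mathcal N(0,\sigma^2 I_k)$, the Gaussian densities give
\[
L = \frac{1}{2\sigma^2}\big(\|w+v\|_2^2 - \|w\|_2^2\big) = \frac{\langle w,v\rangle}{\sigma^2} + \frac{\|v\|_2^2}{2\sigma^2}.
\]
Since $\langle w,v\rangle\sim\mathcal N(0,\sigma^2\|v\|_2^2)$, the privacy loss $L$ is itself Gaussian with mean $\|v\|_2^2/(2\sigma^2)$ and variance $\|v\|_2^2/\sigma^2$. (If $v=0$ then $L\equiv 0\le\epsilon$ and there is nothing to prove, so assume $v\ne 0$.) Standardizing,
\[
\Prob(L>\epsilon) = \mathcal Q\!\left(\frac{\epsilon\sigma}{\|v\|_2} - \frac{\|v\|_2}{2\sigma}\right).
\]

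Then, because $\mathcal Q$ is strictly decreasing and $\delta = \mathcal Q(K)$, the desired bound $\Prob(L>\epsilon)\le\delta$ is equivalent to $\frac{\epsilon\sigma}{\|v\|_2} - \frac{\|v\|_2}{2\sigma}\ge K$. Setting $t := \|v\|_2/\sigma>0$ and multiplying through by $t$, this becomes $t^2 + 2Kt - 2\epsilon\le 0$, whose nonnegative root is $\sqrt{K^2+2\epsilon}-K$; thus the condition reads $\|v\|_2/\sigma\le\sqrt{K^2+2\epsilon}-K$, i.e., after rationalizing the denominator, $\sigma\ge\frac{\|v\|_2(K+\sqrt{K^2+2\epsilon})}{2\epsilon}$. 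Finally, bounding $\|v\|_2\le\Delta_2 q$ uniformly over adjacent pairs shows that the stated choice of $\sigma$ is sufficient.

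I expect the only genuinely conceptual step to be the first one, the reduction of the $(\epsilon,\delta)$ inequality to the tail bound $\Prob(L>\epsilon)\le\delta$; everything after that is the explicit Gaussian computation of $L$ and a quadratic inequality. The minor points that still need care are the degenerate case $v=0$, the monotonicity argument for $\mathcal Q$, and the implicit assumption $\delta>0$ (so that $K=\mathcal Q^{-1}(\delta)$ is finite, reflecting that the Gaussian mechanism cannot give pure $\epsilon$-differential privacy with bounded noise).
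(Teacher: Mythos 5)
Your proof is correct and follows essentially the same route as the proof the paper defers to \cite{LeNyDP2012_journalVersion}: split $S$ according to whether the Gaussian likelihood ratio exceeds $e^{\epsilon}$, reduce to the tail bound $\Prob(L>\epsilon)\le\delta$ written via the $\mathcal Q$-function, and solve the quadratic in $\|v\|_2/\sigma$, which is exactly where the constant $\kappa_{\delta,\epsilon}=\frac{1}{2\epsilon}\bigl(K+\sqrt{K^2+2\epsilon}\bigr)$ with $K=\mathcal Q^{-1}(\delta)$ arises. No gaps; your handling of the degenerate case $v=0$ and the implicit assumptions $\epsilon>0$, $\delta\in(0,1)$ is appropriate.
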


\vspace{0.1cm}
For the rest of the paper, we define 
\[
\kappa_{\delta,\epsilon} = \frac{1}{2 \epsilon} (K+\sqrt{K^2+2\epsilon}),
\]
so that the standard deviation $\sigma$ in Theorem \ref{thm: Gaussian mech}
can be written $\sigma(\delta,\epsilon) = \kappa_{\delta,\epsilon} \Delta_2 q$.
It can be shown that $\kappa_{\delta,\epsilon}$ behaves roughly as $O(\ln(1/\delta))^{1/2}/\epsilon$.
For example, to guarantee $(\epsilon,\delta)$-differential privacy with $\epsilon = \ln(2)$ and $\delta = 0.05$, 
the standard deviation of the Gaussian noise introduced should be about $2.65$ times the $\ell_2$-sensitivity of $q$.


\section{Filtering Event Streams}	\label{section: problem presentation}

We now turn to the description of our scenario of interest, similar to the one introduced 
in \cite{Dwork10_DPcounter, Chan10_counter}. A system receives an input signal $u = \{u_t\}_{t \geq 0}$ with 
values in the discrete set $\left\{ \pm \frac{k}{2}, k \in \mathbb N  \right \}$. Such a signal can for example record 
the number of occurrences of certain events of interest at each period (we centered the values around zero for 
convenience later on). Similarly to \cite{Dwork10_DPcounter, Chan10_counter}, two signals $u$ and $u'$ are 
adjacent if and only if they differ at a single time by at most $d$, or equivalently
\begin{equation}	\label{eq: adjacency event-level DP}
\Adj^d(u,u') \text{ iff } u-u' = k \; \delta_{t_0}, |k| \leq d,\text{for some } t_0, 
\end{equation}
where $\delta_{t_0}$ denotes the discrete impulse at $t_0$.
The motivation for this adjacency relation is that a given individual contributes events to the stream at a single 
time only, and we want to preserve \emph{event-level privacy} \cite{Dwork10_DPcounter}, that is, hide to some 
extent the presence or absence of an event at a particular time. This could for example prevent the inference of 
individual transactions from publicly available collaborative filtering outputs, as in \cite{Calandrino11_privacyAttackCollabFilt}.

Even though individual events should be hidden, we would like to release a filtered version $Fu$ of the original signal,
where $F$ is a given causal stable linear time-invariant system. Note that in this paper, all signals and filter coefficients
are assumed to be real-valued, and all systems are single-input single-output.
%
%
%
Privacy preserving approximations of $F$ can be developed based on the following sensitivity calculation.
\begin{lem}
Let $G$ be a linear time-invariant system with impulse response $g := \{g_t\}_t$. Then, for the adjacency 
relation (\ref{eq: adjacency event-level DP}) on binary-valued input signals, the $\ell_p$ sensitivity of $G$ is
$\Delta_p G = d \|g\|_p$. In particular for $p=2$, we have $\Delta_2 G = d \|G\|_2$, where $\|G\|_2$ is 
the $\mathcal H_2$ norm of $G$.
\end{lem}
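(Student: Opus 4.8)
The plan is to reduce everything to the trivial observation that, under the adjacency relation (\ref{eq: adjacency event-level DP}), two adjacent inputs differ only by a single scaled discrete impulse, and then to push this difference through $G$ using linearity and time-invariance. I would also verify that the resulting bound is tight, since Definition \ref{defn: sensitivity} involves a maximum over adjacent pairs.

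First, fix any pair of adjacent inputs $u,u'$, so that $u-u' = k\,\delta_{t_0}$ for some time $t_0$ and some admissible increment $k$ with $|k|\le d$. By linearity of $G$, $Gu - Gu' = G(u-u') = k\,(G\delta_{t_0})$, and by time-invariance $G\delta_{t_0}$ is just the impulse response shifted by $t_0$, i.e. the signal $\{g_{t-t_0}\}_t$. Since the $\ell_p$ norm on signal space is invariant under time shifts, $\|Gu - Gu'\|_p = |k|\,\|g\|_p \le d\,\|g\|_p$. Here $\|g\|_p < \infty$ for every $p\in[1,\infty]$ because BIBO stability of $G$ gives $g\in\ell_1\subseteq\ell_p$. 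This establishes $\Delta_p G \le d\,\|g\|_p$.

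Second, I would check attainment: taking $u'\equiv 0$ (admissible, corresponding to no events) and $u = d\,\delta_{t_0}$ for an arbitrary $t_0$ (admissible since $d$ is an allowed event count), we have $\Adj^d(u,u')$ and $\|Gu-Gu'\|_p = d\,\|g\|_p$, so the maximum is achieved. Combining the two steps yields $\Delta_p G = d\,\|g\|_p$. For the special case $p=2$, I would invoke Parseval's theorem: $\|g\|_2^2 = \sum_t g_t^2 = \frac{1}{2\pi}\int_{-\pi}^{\pi}|G(e^{j\omega})|^2\,d\omega = \|G\|_2^2$, the squared $\mathcal H_2$ norm of $G$, giving $\Delta_2 G = d\,\|G\|_2$.

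The argument is essentially mechanical, so there is no real obstacle; the only point that deserves a line of care is the attainment step, namely confirming that an adjacent pair realizing the extreme jump $|k|=d$ actually lives inside the prescribed discrete signal space. Here this is immediate, since both the zero signal and a single impulse of height $d$ are admissible inputs, but in a setting with a more constrained signal space one would need to revisit it (and the result would then become an inequality rather than an equality).
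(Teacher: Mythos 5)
Your proof is correct and follows essentially the same route as the paper: write the difference of adjacent inputs as a scaled impulse, push it through $G$ by linearity and time-invariance, and use shift-invariance of the $\ell_p$ norm together with Parseval for the $p=2$ case. You are in fact slightly more careful than the paper's one-line computation, which asserts the equality $\|G(u-u')\|_p = d\|g*\delta_{t_0}\|_p$ directly without separating the upper bound ($|k|\leq d$) from the attainment step that you spell out.
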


\begin{proof}
For two adjacent binary-valued signals $u, u'$, we have
\begin{align*}
\|Gu - Gu'\|_p &= \|G(u-u')\|_p = d \|g * \delta_{t_0}\|_p \\
&= d \|\{g_{t-t_0}\}_{t}\|_p = d \| g \|_p.
\end{align*}
\end{proof}

This leads to the following theorem, generalizing Theorem \ref{thm: Gaussian mech} to dynamic systems. 
Certain technical measurability issues in the proof of this result are resolved in \cite{LeNyDP2012_journalVersion}.
\begin{thm}	\label{eq: basic DP mechanism}
The mechanism $M(u) = Gu + n$, where $n$ is a Gaussian white noise with covariance $d^2 \kappa_{\delta,\epsilon}^2 \|G\|_2^2$,
is $(\epsilon,\delta)$-differentially private for the adjacency relation (\ref{eq: adjacency event-level DP}).
\end{thm}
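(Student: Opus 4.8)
The plan is to reduce the infinite-horizon claim to a family of finite-dimensional Gaussian mechanisms indexed by a time window, and then pass to the limit. Fix two adjacent inputs $u,u'$, so that by (\ref{eq: adjacency event-level DP}) we have $u-u' = k\,\delta_{t_0}$ with $|k|\le d$ for some $t_0$. For each horizon $T\ge 0$ I would introduce the truncated query $q_T(u) := ((Gu)_0,\dots,(Gu)_T)\in\mathbb R^{T+1}$. Since $G$ is causal, $(Gu-Gu')_t = k\,g_{t-t_0}$ with $g_s=0$ for $s<0$, so $\|q_T(u)-q_T(u')\|_2 = |k|\big(\sum_{s=0}^{T-t_0} g_s^2\big)^{1/2}\le d\,\|g\|_2 = d\,\|G\|_2$; this is exactly the estimate in the preceding lemma, restricted to a finite window, and it shows $\Delta_2 q_T \le d\,\|G\|_2$ uniformly in $T$.

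Next I would apply Theorem \ref{thm: Gaussian mech} to each $q_T$. The mechanism $M_T(u) := q_T(u) + (n_0,\dots,n_T)$, obtained by adding the first $T+1$ samples of the white noise $n$, perturbs each coordinate by independent $\mathcal N(0,\sigma^2)$ noise with $\sigma = d\,\kappa_{\delta,\epsilon}\|G\|_2 = \kappa_{\delta,\epsilon}\,(d\|G\|_2) \ge \kappa_{\delta,\epsilon}\,\Delta_2 q_T$, so $M_T$ is $(\epsilon,\delta)$-differentially private for every $T$. Then for any cylinder set $S = \pi_T^{-1}(A)$ of the output signal space, where $\pi_T$ projects onto the first $T+1$ coordinates and $A$ is Borel in $\mathbb R^{T+1}$, we have $\Prob(M(u)\in S) = \Prob(M_T(u)\in A)$, and hence (\ref{eq: standard def approximate DP original}) holds for all such $S$.

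It then remains to upgrade the inequality from cylinder sets to all sets in the product $\sigma$-algebra of the output signal space, and this is where the real work lies. I would consider the finite signed measure $\mu(\cdot) := \Prob(M(u)\in\cdot) - e^{\epsilon}\Prob(M(u')\in\cdot)$, whose total variation is at most $1+e^{\epsilon}$; the previous paragraph shows $\mu(S)\le\delta$ for every $S$ in the algebra $\bigcup_T \sigma(\pi_T)$, which generates the full $\sigma$-algebra. Since $|\mu|$ is finite, any measurable $S$ can be approximated by cylinder sets $S_n$ with $|\mu|(S\triangle S_n)\to 0$, so $\mu(S) = \lim_n \mu(S_n)\le\delta$, giving (\ref{eq: standard def approximate DP original}) for all measurable $S$ and all adjacent $u,u'$. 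The main obstacle is the measure-theoretic side of this step: one must first make sure that $M$ is a bona fide mechanism, i.e. that $u\mapsto M(u)$ defines a random element of the infinite-dimensional output space with the product $\sigma$-algebra and that $u\mapsto\Prob(M(u)\in S)$ is appropriately measurable, and only then that the finite-window guarantees pass to the limit; these are exactly the technical points handled in \cite{LeNyDP2012_journalVersion}. The rest — the sensitivity bound and the appeal to the finite-dimensional Gaussian mechanism — is routine given the preceding lemma and Theorem \ref{thm: Gaussian mech}.
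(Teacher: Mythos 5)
Your proposal is correct and follows essentially the same route as the paper: the sensitivity lemma gives $\Delta_2 \le d\|G\|_2$ uniformly over finite windows, Theorem \ref{thm: Gaussian mech} handles each finite-dimensional truncation, and the infinite-horizon statement follows by a limiting/measure-theoretic argument, which the paper itself does not spell out but defers to \cite{LeNyDP2012_journalVersion}. Your cylinder-set approximation via the finite signed measure $\mu$ is a reasonable and correct way to carry out exactly that deferred step, so there is no gap beyond the measurability caveats you already flag.
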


Theorem \ref{eq: basic DP mechanism} can now be combined with Theorem \ref{thm: resilience to post-processing} to obtain a
family of privacy preserving mechanisms approximating $F$, as illustrated on Fig. \ref{fig: DP filter approximation}. On that figure, the signal $v$ is
differentially private, and hence $\hat y$ as well by the resilience to post-processing property (Theorem \ref{thm: resilience to post-processing}). 
Two extreme cases include $G=\text{id}$, called input perturbation, and $H=\text{id}$, called output perturbation. In general however, 
these two choices can exhibit very poor performance \cite{LeNy_DP_CDC12}. Throughout this paper, we measure 
the precision of specific approximations  by the mean square error (MSE) between the published and desired outputs, i.e., 
\[
\lim_{T \to \infty} \frac{1}{T} \sum_{t=0}^\infty \mathbb E[|e_t|^2],
\]
with $e = y - \hat y$. The next section is devoted to the description of two ways of choosing the filters $G, H$ as linear filters.

\begin{figure}
\includegraphics[width=\linewidth]{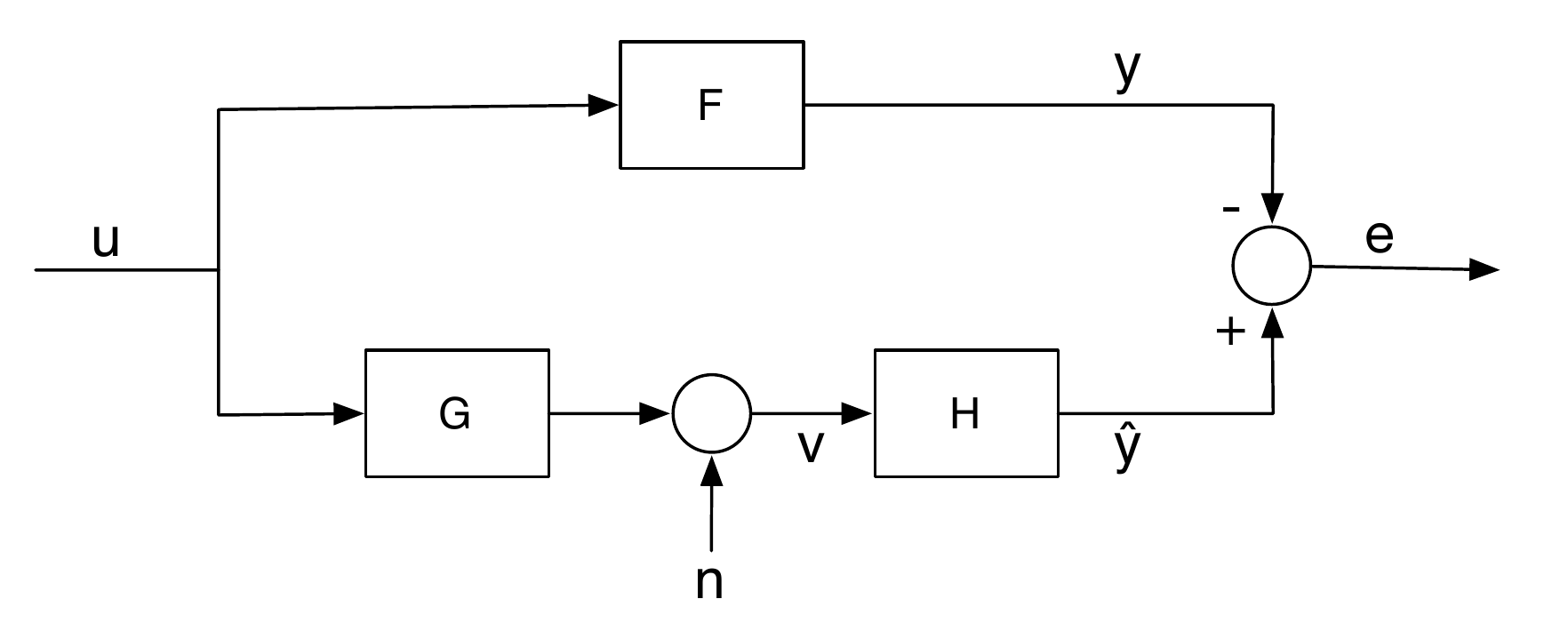}
\caption{Differentially private filter approximation set-up. For $v$ to be differentially private, we take $n$ to be a white Gaussian noise
with variance $\mathbb E[|n_t|^2] = d^2 \kappa_{\delta,\epsilon}^2 \|G\|_2^2$.}
\label{fig: DP filter approximation}
\end{figure}


\section{Linear Equalization Mechanisms}		\label{eq: linear equalization mechanisms}

\subsection{Linear Zero-Forcing Mechanism}

We first recall a mechanism initially described in \cite{LeNy_DP_CDC12}, which we call here the Linear 
Zero-Forcing (LZF) mechanism.
Note that once the differentially private signal $v=Gu+n$ is obtained, the task of estimating $y$ from $v$ is a standard
estimation (or equalization) problem. The LZF mechanism is based on the linear zero-forcing equalization idea, 
and its main advantage is that it requires no statistical information about the input signal $u$.
Let $G$ be a stable, minimum phase filter (hence invertible). Let $H = F G^{-1}$. 
To guarantee $(\epsilon,\delta)$-differential privacy, the noise $n$ is chosen to be white Gaussian 
with variance $d^2 \kappa_{\delta,\epsilon}^2 \|G\|^2_2$. 
The MSE for the LZF mechanism is then
\begin{align*}
\xi^{LZF} = d^2 \kappa^2_{\epsilon, \delta} \|G\|_2^2 \|FG^{-1}\|^2_2.
\end{align*}
The best possible choice of filters $G$ is then described in the following theorem \cite{LeNy_DP_CDC12}.

\vspace{0.2cm}
\begin{thm}	\label{thm: error for ZFE mechanism}
We have, for any stable, minimum phase system $G$, 
\[
\xi^{LZF} \geq d^2 \kappa_{\epsilon, \delta}^2 \left(\frac{1}{2 \pi} \int_{-\pi}^\pi |F(e^{j\omega})| d \omega \right)^2.
\]
This lower bound on the mean-squared error of the LZF mechanism is attained by letting 
$|G(e^{j \omega})|^2 = \lambda |F(e^{j \omega})|$ for all $\omega \in [-\pi,\pi)$,
where $\lambda$ is some arbitrary positive number. It can be approached arbitrarily closely by stable, 
rational, minimum phase transfer functions $G$. 
\end{thm}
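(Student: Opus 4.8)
The plan is to move to the frequency domain and recognise the quantity to be minimised as a product of two integrals governed by the Cauchy--Schwarz inequality. First I would use Parseval's theorem: for a stable system $R$ with frequency response $R(e^{j\omega})$ one has $\|R\|_2^2 = \frac{1}{2\pi}\int_{-\pi}^{\pi} |R(e^{j\omega})|^2\, d\omega$. Since $G$ is stable and minimum phase, $G^{-1}$ is stable and $|(FG^{-1})(e^{j\omega})|^2 = |F(e^{j\omega})|^2/|G(e^{j\omega})|^2$, so that
\begin{equation*}
\xi^{LZF} = d^2 \kappa_{\epsilon,\delta}^2 \left( \frac{1}{2\pi}\int_{-\pi}^{\pi} |G(e^{j\omega})|^2\, d\omega \right)\left( \frac{1}{2\pi}\int_{-\pi}^{\pi} \frac{|F(e^{j\omega})|^2}{|G(e^{j\omega})|^2}\, d\omega \right).
\end{equation*}
If either factor is infinite the inequality is trivial, so I would assume both are finite.

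Next I would apply the Cauchy--Schwarz inequality on $L^2\big([-\pi,\pi), \frac{d\omega}{2\pi}\big)$ to the functions $\omega \mapsto |G(e^{j\omega})|$ and $\omega \mapsto |F(e^{j\omega})|/|G(e^{j\omega})|$, whose pointwise product is exactly $|F(e^{j\omega})|$. This gives
\begin{equation*}
\left( \frac{1}{2\pi}\int_{-\pi}^{\pi} |F(e^{j\omega})|\, d\omega \right)^2 \le \|G\|_2^2\, \|FG^{-1}\|_2^2,
\end{equation*}
and multiplying through by $d^2 \kappa_{\epsilon,\delta}^2$ yields the stated lower bound. Equality in Cauchy--Schwarz holds precisely when $|G(e^{j\omega})|$ is proportional to $|F(e^{j\omega})|/|G(e^{j\omega})|$, i.e. when $|G(e^{j\omega})|^2 = \lambda |F(e^{j\omega})|$ for a constant $\lambda>0$; substituting this back, the two $\mathcal{H}_2$ norms become $\frac{\lambda}{2\pi}\int_{-\pi}^{\pi}|F|\,d\omega$ and $\frac{1}{2\pi\lambda}\int_{-\pi}^{\pi}|F|\,d\omega$, so their product equals the lower bound for every $\lambda$, which also explains why $\lambda$ is arbitrary.

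The remaining point is that the optimal magnitude profile $|G(e^{j\omega})|^2 = \lambda|F(e^{j\omega})|$ is generally not the squared magnitude of a rational transfer function, so I must argue that the bound is approached within the admissible class. Here the plan is to note that $\lambda|F(e^{j\omega})|$ is a nonnegative, $2\pi$-periodic, integrable function whose logarithm is also integrable (finite-order zeros of the rational $F$ on the unit circle contribute only integrable logarithmic singularities), so by approximating $\frac{1}{2}\log(\lambda|F(e^{j\omega})|)$ by a trigonometric polynomial and exponentiating --- equivalently, by spectral factorisation together with the Fej\'er--Riesz theorem --- one obtains stable, rational, minimum-phase filters $G_n$ with $|G_n(e^{j\omega})|^2 \to \lambda|F(e^{j\omega})|$ and $|F(e^{j\omega})|^2/|G_n(e^{j\omega})|^2 \to |F(e^{j\omega})|/\lambda$ in $L^1([-\pi,\pi))$, hence $\|G_n\|_2^2\,\|FG_n^{-1}\|_2^2 \to \big(\frac{1}{2\pi}\int_{-\pi}^{\pi}|F|\,d\omega\big)^2$.

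The core inequality is thus a one-line application of Cauchy--Schwarz; the main obstacle is the approximation step. There one must produce genuinely rational, stable, minimum-phase $G_n$ that simultaneously control both $\|G_n\|_2$ and $\|FG_n^{-1}\|_2$ --- in particular handling any zeros of $F$ on the unit circle, which the approximants can place just inside the disk --- and ensure the convergence of the magnitudes is in a mode ($L^1$) strong enough to pass to the limit in both integrals at once. It is also worth checking at the outset that the class of admissible $G$ for which both norms are finite is nonempty, so that the infimum is meaningful.
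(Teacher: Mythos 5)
Your proposal is correct and follows essentially the same route as the paper's proof: express $\xi^{LZF}=d^2\kappa_{\epsilon,\delta}^2\|G\|_2^2\|FG^{-1}\|_2^2$ via Parseval and apply Cauchy--Schwarz to $|G|\cdot|F|/|G|$, with equality exactly when $|G(e^{j\omega})|^2=\lambda|F(e^{j\omega})|$, and then use spectral factorization to approach the bound with stable, rational, minimum-phase $G$ (the Paley--Wiener/spectral-factorization remark following the theorem in the paper plays precisely this role). No gaps worth flagging.
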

\vspace{0.5cm}

Note that if $|F(e^{j \omega})|$ satisfies the Paley-Wiener condition
\[
\frac{1}{2 \pi} \int_{-\pi}^\pi \log |F(e^{j \omega})| d \omega > - \infty,
\]
then it has a spectral factorization $|F(e^{j \omega})| = \phi^+(\omega) \phi^-(\omega)$ and the bound of Theorem \ref{thm: error for ZFE mechanism}
is attained by taking $G$ with impulse reponse
\[
g_k = \frac{1}{2\pi} \int_{-\pi}^\pi \phi^+(\omega) e^{j \omega k} d \omega, \;\; k \geq 0.
\]
Note also that the MSE obtained for the best LZF mechanism in Theorem \ref{thm: error for ZFE mechanism}  is independent of the input signal $u$.
The design of $H$ does not attempt to minimize the effect of the noise $n$, as is the case with zero-forcing equalizers \cite{Proakis00_digitalComBook}.
The next section discusses another scheme that achieves a smaller error but requires some additional public knowledge about the statistics of the
input signal $u$.

\subsection{LMMSE Mechanism}		\label{section: MMSE mechanism}

The main issue with linear zero-forcing equalizers in communication systems is the noise amplification behavior 
at frequencies where $|G(e^{j \omega})|$ is small, due to the inversion in $H = F G^{-1}$. However, this issue is not as 
problematic for the optimal LZF mechanism, since in this case we essentially have $|H(e^{j \omega})|=\sqrt{|F(e^{j \omega})|}$, 
i.e., the amplification is compensated by the fact that $|F(e^{j \omega})|$ and $|G(e^{j \omega})|$ are both small 
at the same frequencies. Nonetheless, in this section we explore a scheme based on minimum
mean square equalization, which we call the Linear Minimum Mean Square Error (LMMSE) mechanism, and which 
can exhibit better performance than the LZF mechanism but requires some additional knowledge about the second order statistics of $u$.
This scheme was briefly discussed in \cite{LeNy_DP_CDC12}, but the optimization of $G$ described below was not
performed in that paper.

Hence, assume that that it is publicly known that $u$ is wide-sense stationary 
with know mean $\mu$ and autocorrelation $r_u[k] = \mathbb E[u_t u_{t-k}], \forall k$. Without loss of generality, we can then assume $\mu$ to
be zero, by substracting the known mean of $y$ equal to $F(1) \mu$.
The power spectral density of $u$ is denoted $P_u$, and is assumed to be rational for simplicity. 


The LMMSE mechanism is based on designing the filter $H$ as a Wiener filter in order to estimate $y$ from $v$. 
For tractability reasons, we derive the performance of the non-causal infinite impulse response Wiener filter, and optimize the choice of
$G$ with respect to this choice for $H$. Once $G$ is fixed, real-time consideration issues can force us to use a suboptimal design
with $H$ a causal Wiener filter, or perhaps introducing a small delay.

The non-causal Wiener filter $H$ has the transfer function
\[
H(z) = \frac{P_{yv}(z)}{P_v(z)},
\]
where $P_{yv}$ is the cross power spectral density of $y$ and $v$.
Since $w$ and $u$ are uncorrelated, we have
\[
P_{yv}(z) = P_u(z) F(z) G(z^{-1}).
\]
As for $P_v$, we have, with $n$ a white noise of variance $\sigma^2 = d^2 \kappa_{\delta,\epsilon}^2 \|G\|_2^2$,
\[
P_v(z) = P_u(z) G(z) G(z^{-1}) + \sigma^2.
\]
Hence 
\begin{align}		\label{eq: LMMSE filter}
H(z) = \frac{P_u(z) F(z) G(z^{-1})}{P_u(z) G(z) G(z^{-1}) + \kappa_{\delta,\epsilon}^2 \|G\|_2^2}.
\end{align}
The MSE can then be expressed as
\begin{align}
\xi^{LMMSE} &= \frac{1}{2 \pi} \int_{-\pi}^\pi \frac{P_u(e^{j \omega}) |F(e^{j \omega})|^2} 
{ \frac{P_u(e^{j \omega})}{d^2 \kappa_{\delta,\epsilon}^2} \frac{|G(e^{j \omega})|^2}{\|G\|_2^2} + 1 } d \omega. \label{eq: error IIR Wiener noncausal}
\end{align}

Note that we recover the LZF mechanism in the limit  $P_u(e^{j \omega}) >> d^2 \kappa(\delta,\epsilon)^2$.

\subsubsection{Privacy-Preserving Filter Optimization}		\label{section: MMSE optimization}

A close-to-optimal filter $G$ for the LMMSE mechanism can then be obtained by optimization, assuming initially that the reconstruction is done with the 
non-causal Wiener filter $H$. We discretize (\ref{eq: error IIR Wiener noncausal}) at the set of frequencies $\omega_i = \frac{i \pi}{N}, i=0 \ldots N$.
Note that all functions in the integral (\ref{eq: error IIR Wiener noncausal}) are even functions of $\omega$, hence we can restrict out attention to the interval
$[0,\pi]$.
We then define the $N+1$ variables 
\begin{align}		\label{eq: decision vars frequency domain}
x_i = \frac{|G(e^{j \omega_i})|^2}{\|G\|_2^2}, \;\; x_i \geq 0,
\end{align}
and the nonnegative constants 
\begin{align*}
\alpha_i &= P_u(e^{j \omega_i}) |F(e^{j \omega_i})|^2, \;\; i=0, \ldots N \\
\beta_i &= \frac{P_u(e^{j \omega_i})}{d^2 \kappa_{\delta,\epsilon}^2}, \;\;  i=0, \ldots N.
\end{align*}
The minimization of the error (\ref{eq: error IIR Wiener noncausal}) leads to the following problem 
(using a trapezoidal approximation of the integrals)
\begin{align}
\min_{\mathbf x} \;\;& \frac{1}{2N} \sum_{i=0}^{N-1} \frac{\alpha_i}{\beta_i x_i + 1} + \frac{\alpha_{i+1}}{\beta_{i+1} x_{i+1} + 1} \label{eq: optimization LMMSE-M} \\
\text{s.t. } \;\;& \frac{1}{2N} \sum_{i=0}^{N-1} x_i + x_{i+1} = 1 \label{eq: normalization constraint} \\
\;\;& x_i \geq 0, \;\; i=0, \ldots N. \nonumber
\end{align}
Note that the constraint (\ref{eq: normalization constraint}) comes from the fact that
\[
\frac{1}{\pi} \int_{0}^\pi \frac{|G(e^{j \omega})|^2}{\|G\|_2^2} d \omega
=\frac{1}{2\pi} \int_{-\pi}^\pi \frac{|G(e^{j \omega})|^2}{\|G\|_2^2} d \omega = 1.
\]

The optimization problem (\ref{eq: optimization LMMSE-M}) is convex, and can thus be solved efficiently even for fine discretizations of the interval $[0,\pi]$.
The transfer function of the filter $G$ can then be obtained for example by simple interpolation.

\begin{rem}
Even if the statistical assumptions on $u$ turn out not to be correct, the differential privacy guarantee of the LMMSE mechanism still holds and 
only its performance is impacted.
\end{rem}

\subsubsection{Causal Mechanism}		\label{section: causal MMSE}

The previous description of the LMMSE mechanism involves a possibly non-causal filter $H$. Sometimes, the anti-causal part of this filter might have 
a fast decreasing impulse response, in which case the scheme can be implemented approximately by introducing a small delay in the release 
of the output signal $\hat y$. Otherwise, we need to implement a causal Wiener filter $H$. Denoting the spectral factorization of $P_v$
\[
P_v(z) = \gamma_v^2 Q_v(z) Q_v(z^{-1}),
\]
we then have
\[
H(z) = \frac{1}{\gamma_v^2 Q_v(z)} \left[  \frac{P_{yv}(z)}{Q_v(z^{-1})}   \right]_+, 
\]
where, for a linear filter $L$ with impulse reponse $\{l_t\}_{- \infty \leq t \leq \infty}$, $[L(z)]_+$ denotes the causal filter with impulse response $\{l_t \mathbf 1_{\{t \geq 0\}}\}_t$.
Due to the more complex expression for $H$ and the resulting MSE, the design of the optimal filter $G$ in this case  is left for future work. Here, we optimize
$G$ assuming a possibly non-causal filter $H$, and then simply modify $H$ afterwards if causality needs to be enforced.


\section{Decision-Feedback Mechanisms}		\label{eq: DF mechanisms}

In general, solutions to the problem of reconstructing the optimum maximum-likelihood estimator of $\{(Fu)_k\}_{k \geq 0}$ from 
$\{v_k\}_{k \geq 0}$ are computationally intensive and require the knowledge of the full joint probability distribution of 
$\{u_k\}_{k \geq 0}$ \cite{Proakis00_digitalComBook}. This is the main reason why simpler linear architectures such as 
the one described in Section \ref{eq: linear equalization mechanisms} are more often implemented in communication receivers. 
However, so far, we have not exploited in the estimation procedures the knowledge that the input signal takes discrete values 
(or perhaps is even binary valued, as in \cite{Dwork10_DPcounter, Chan11_DPcontinuous}). This can be done by introducing only a slight
degree of nonlinearity, using the idea of decision-feedback equalization \cite{Proakis00_digitalComBook}. We call the resulting 
mechanism a Decision-Feedback (DF) mechanism. Its architecture is depicted on Fig. \ref{fig: DF mechanism}.

\begin{figure}
\includegraphics[width=\linewidth]{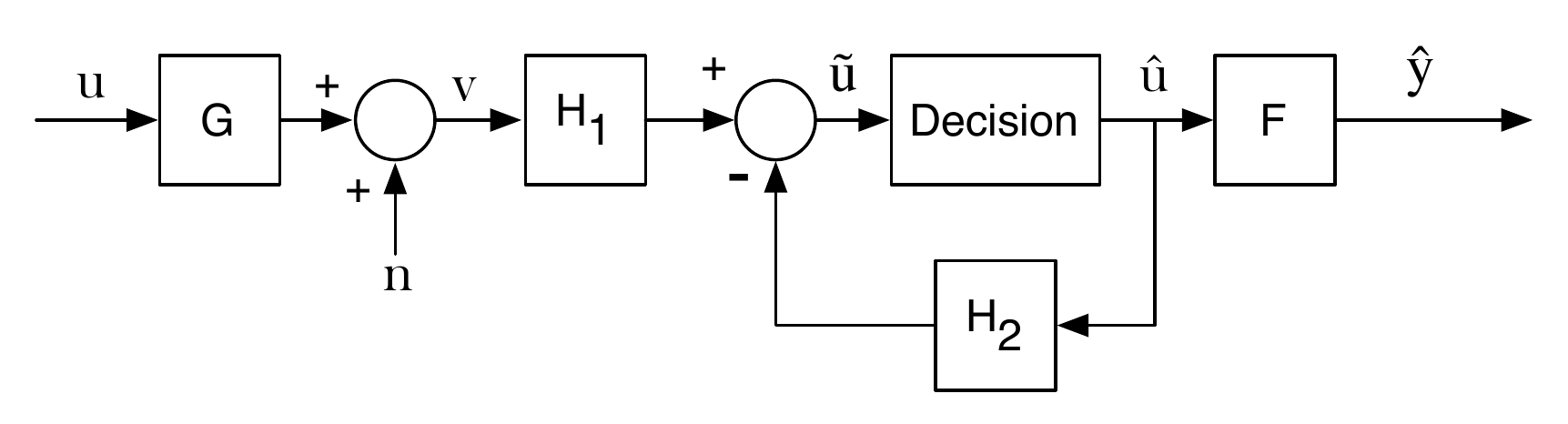}
\caption{Decision-feedback mechanism. The decision block is nonlinear and depends on the knowledge about the input signal $u$, 
acting as a detector/quantizer.}
\label{fig: DF mechanism}
\end{figure}

The second stage of a DF mechanism consists of a forward filter $H_1$, a nonlinear decision procedure (detector or quantizer) to estimate $u$ from $\tilde u$, 
which exploits the fact that $u$ takes discrete values, and a filter $H_2$ that feeds back the previous symbol decisions to correct the
intermediate estimate $\tilde u$. $H_2$ is assumed to be strictly causal, but generally $H_1$ is taken to be non-causal in standard equalizers,
for better performance \cite{Voois96_delayDFE}. Hence, DF mechanisms will typically introduce a small delay in the publication 
of the output signal $\hat y$. In the absence of detailed information about the distribution of $u$, the decision device can be a simple 
quantizer 
for integer valued input sequences, or a detector $\hat u_k = \texttt{sign}(\tilde u_k)$ for binary valued input sequences.

DF equalizers have a long history, and approximate expressions for their MSE can be derived \cite{Voois96_delayDFE}.
For tractability reasons, these derivations invariably make the simplifying assumption that the decisions $\hat u$ that enter the 
feedback filter are correct, i.e., $\hat u \equiv u$. Unfortunately, it appears that optimizing $G$ for the resulting approximate expression 
of the MSE is often not a good strategy, because the simplification results in a filter $G$ that does not need to be adapted to the
query $F$ any more (only to $P_u$). Still, we detail this optimization below and discuss an alternative design strategy for $G$ 
at the end of the section.

The error between the desired output $Fu$ and the signal $F \tilde u$, where $\tilde u$ is the input of the detector, is
\begin{align*}
e &= F(u - \tilde u) = F(u - H_1 v + H_2 \hat u),
\end{align*}
which, under the standard but simplifying assumption that $\hat u \equiv u$, gives
\[
e \approx F (B u - H_1 v), 
\]
with $B(z) = 1+H_2(z)$ a monic filter (since $H_2$ is strictly causal). As in section \ref{section: MMSE mechanism}, minimizing this approximate 
error (over possibly non-causal filters) requires $H_1$ to satisfy
\begin{align*}
H_1(z) &= B(z) \frac{P_{uv}(z)}{P_v(z)} \\
&= B(z) \frac{P_u(z) G(z^{-1})}{P_u(z) G(z) G(z^{-1}) + d^2 \kappa_{\delta,\epsilon}^2 \|G\|_2^2}.
\end{align*}
For this choice of $H_1$, the approximate MSE becomes
\begin{align}		\label{eq: DF approx error before B}
\xi^{DF} \approx \frac{1}{2 \pi} \int_{-\pi}^\pi \frac{P_u(e^{j \omega}) |B(e^{j \omega})|^2 |F(e^{j \omega})|^2} 
{ \frac{P_u(e^{j \omega})}{d^2 \kappa_{\delta,\epsilon}^2} \frac{|G(e^{j \omega})|^2}{\|G\|_2^2} + 1 } d \omega. 
\end{align}
Assuming now the spectral factorizations
\begin{align*}
P_u(e^{j \omega}) &= \gamma_u^2 |Q_u(e^{j \omega})|^2 \\
|F(e^{j \omega})|^2 &= \gamma_F^2 |Q_F(e^{j \omega})|^2 \\
\frac{P_u(e^{j \omega})}{d^2 \kappa_{\delta,\epsilon}^2} \frac{|G(e^{j \omega})|^2}{\|G\|_2^2} + 1 
&= \gamma^2 |Q(e^{j \omega})|^2,
\end{align*}
with $Q, Q_u$ and $Q_F$ canonical filters (monic, causal and minimum-phase), the approximate error (\ref{eq: DF approx error before B}) can be minimized by setting
\[
B(z) =  \frac{Q(z)}{Q_u(z) Q_F(z)}.
\]
The minimum approximate MSE is then
\begin{align}
&\xi^{DF} \approx \frac{\gamma_u^2 \gamma_F^2}{\gamma^2} \label{eq: final approximate MSE DFM} \\
&\approx \gamma_u^2 \gamma_F^2
\exp \left( {-\frac{1}{2 \pi} \int_{-\pi}^\pi \ln \left( \frac{P_u(e^{j \omega})}{d^2\kappa_{\delta,\epsilon}^2} \frac{|G(e^{j \omega})|^2}{\|G\|_2^2} + 1 \right) d \omega} \right).	\nonumber
\end{align}

The last expression is based on a well-known formula for $\gamma^2$, see \cite[p.105]{Hayes96_SPbook}.
Hence we see that an artifact of this approach is that the influence of $F$ and $G$ is decoupled,
and thus the minimization of (\ref{eq: final approximate MSE DFM}) over $G$ leads to a solution that is independent of $F$, 
which is generally undesirable. For example, for $u$ iid with $P_u(e^{j \omega}) \equiv 1$, optimizing (\ref{eq: final approximate MSE DFM}) 
gives the trivial solution $G(e^{j \omega}) \equiv 1$, and the whole mechanism reduces to an input perturbation scheme with
an additional decision stage. Nonetheless, for completeness we mention that optimizing 
(\ref{eq: final approximate MSE DFM}) over the choice of $G$ can be done using a discretization approach similar
to the one used in Section \ref{section: MMSE optimization}, now solving the convex optimization problem
\begin{align}
\max_{\mathbf x} \;\;& \frac{1}{2N} \sum_{i=0}^{N-1} \ln(\beta_i x_i + 1) + \ln(\beta_{i+1} x_{i+1} + 1) \label{eq: optimization DF-M} \\  
\text{s.t. } \;\;& \frac{1}{2N} \sum_{i=0}^{N-1} x_i + x_{i+1} = 1 \nonumber \\
\;\;& x_i \geq 0, \;\; i=0, \ldots N. \nonumber
\end{align}

In view of these issues, we mention an alternative design strategy for DF-mechanisms. Note from (\ref{eq: LMMSE filter}) that the (non-causal) 
LMMSE mechanism involves a reconstruction filter $H(z) = F(z) H_u(z)$, with $H_u$ the LMMSE estimator for $u$. Therefore we can interpret 
the DF mechanism on Fig. \ref{fig: DF mechanism} as introducing an additional stage to the linear mechanisms, to discretize the estimate of $u$,
and replacing $H_u$ by $H_1$. A strategy to improve on the performance of the LMMSE (or LZF) mechanism is then to keep the same 
prefilter $G$ designed in Section \ref{section: MMSE mechanism}, but simply replace the Wiener filter by a decision-feedback equalizer.
Our preliminary results tend to confirm that good performance is achievable with this strategy.

\section{Example}		\label{eq: example}

Consider approximating the filter
\[
F(z) = \frac{1+0.995 z^{-1}}{1-0.995 z^{-1}},
\]
with the privacy parameters set to $\epsilon = \ln 3$, $\delta = 0.05$.
The (wide-sense stationary) input signal is assumed to be binary valued, i.e., $u_t \in \left\{ \pm \frac{1}{2} \right\}$ for all $t$, with zero mean and power spectral
density 
\[
P_u(z) = \frac{3/4}{\left(1-\frac{1}{2}z^{-1}\right)\left(1-\frac{z}{2}\right)}.
\]
Such a signal can be generated by a two-state Markov chain in the stationary regime, with transition probability matrix
\[
\begin{bmatrix}
3/4 & 1/4 \\ 1/4 & 3/4
\end{bmatrix},
\]
one state corresponding to the input $-1/2$, and the other state corresponding to the input $1/2$, see, e.g., \cite{Brighenti09_MCspectrum}.
In this context we can imagine that the transitions are generated by individual users, and we want to prevent an adversary analyzing the trace 
$\{(Fu)_t\}_t$ to infer with confidence in which state the chain was at a particular time.

We designed four mechanism: LZF, LMMSE with $G$ optimized based on (\ref{eq: optimization LMMSE-M}), 
DF with $G$ optimized based on (\ref{eq: optimization DF-M}), and DF with the same $G$ as for the LMMSE mechanism.
The DF estimators introduce a $5$-period delay in the production of the estimate (finite impulse response equalizers were implemented here,
based on \cite{Voois96_delayDFE}).
Typical sample paths for these four mechanisms are shown on Fig. \ref{fig: sample paths}.
The theoretical root MSE (RMSE) for the LZF and (non-causal) LMMSE mechanisms are $8.82$ and $7.43$ respectively.
We see that the DF mechanisms significantly reduces the fluctuations in the produced output. 
Moreover, the LMMSE pre-filter $G$ leads to a clearly better performance for the DF mechanism than the one  
based on (\ref{eq: optimization DF-M}) in this case. The magnitude of the frequency response $|G(e^{j \omega})|$ is shown
on Fig. \ref{fig: Bode plots} for both filters. The cutoff of the LMMSE pre-filter occurs much earlier, taking into
account the fact that $F$ filters the high frequencies of $u$ anyway, and this helps to reduce the degradation
due to the privacy-preserving noise $n$.

\begin{figure}
\includegraphics[width=\linewidth]{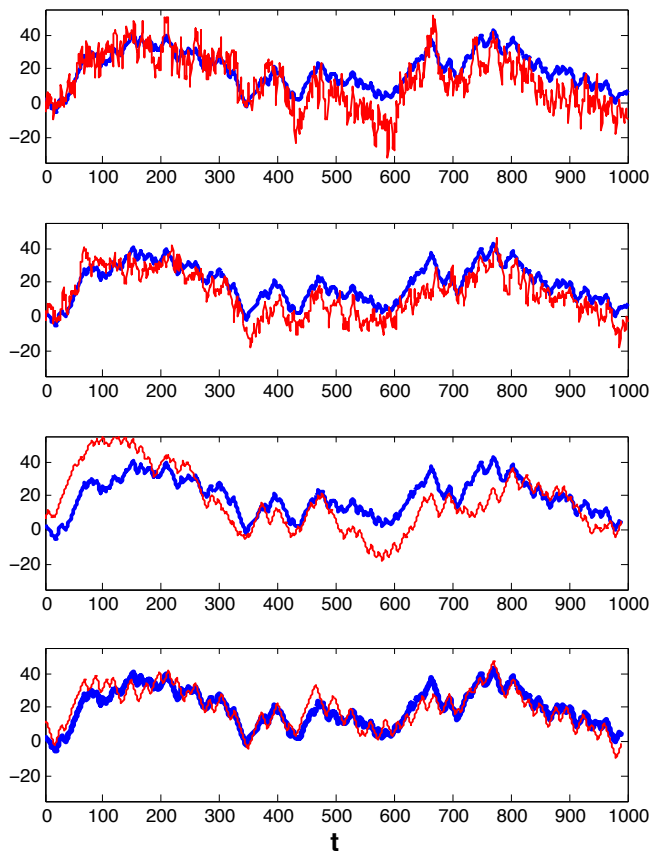}
\caption{Sample Paths for four mechanisms. From top to bottom: LZF, LMMSE, DF with $G$ optimized based on (\ref{eq: optimization DF-M}), 
and DF with the same $G$ as for the LMMSE mechanism. The original non-private output is shown as the thick blue line.} 
\label{fig: sample paths}
\end{figure}

\begin{figure}
\includegraphics[width=\linewidth]{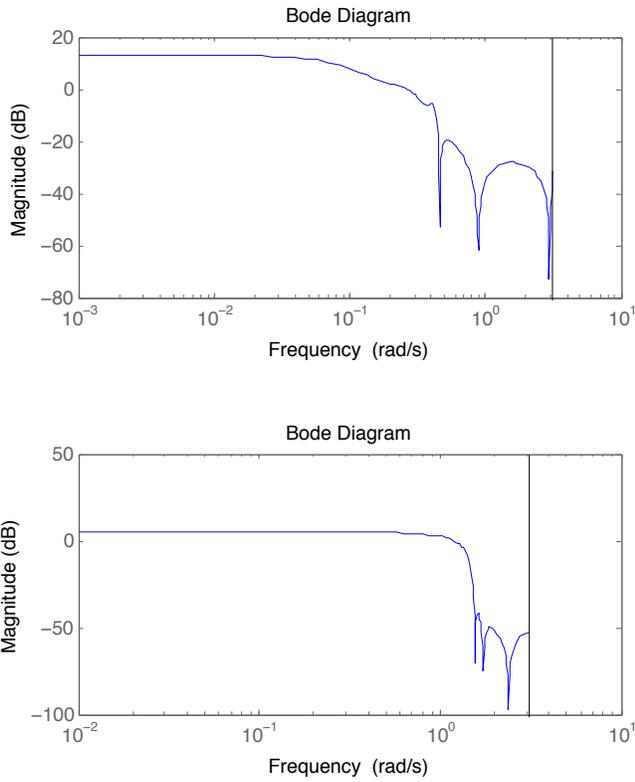}
\caption{Magnitude of the frequency reponse $|G(e^{j \omega})|$ for $G$ designed based on (\ref{eq: optimization LMMSE-M}) (top), 
and based on (\ref{eq: optimization DF-M}) (bottom).}
\label{fig: Bode plots}
\end{figure}






\section{Conclusions and Future Works}



In this paper, we have described several estimation techniques that can be leveraged
to minimize the impact on performance of a differential privacy specification for the filtering of event streams. 
The architecture considered here for the privacy mechanisms decomposes the problem into a standard 
equalization problem, for which many alternatives techniques could be used, and a first-stage privacy-preserving 
filter optimization problem. Future work on differentially private filtering for event streams includes enforcing privacy 
in scenarios where a single end-user can generate events at multiple times, optimizing SIMO and MIMO architectures
from a state-space perspective, and adaptive mechanisms that work in the absence of statistics for the input signals.

%



\bibliographystyle{IEEEtran}
\bibliography{IEEEabrv,/Users/jerome/Dropbox/Research/bibtex/energy,/Users/jerome/Dropbox/Research/bibtex/securityPrivacy,/Users/jerome/Dropbox/Research/bibtex/signalProcessing,/Users/jerome/Dropbox/Research/bibtex/controlSystems,/Users/jerome/Dropbox/Research/bibtex/communications} 



\end{document}